 %%%%%%%%%%%%%%%%%%%%%%%%%%%%%%%%%%%%%%%%%%%%%%%%%%%%%%%%%%%
%qui sotto le abbreviazioni da usare per i teoremi e le
%definizioni%%%%%%%%%%%%%%%%%%%%%%%%%%%%%%%%%%%%%%%%%%%%%%

% Theorem -> thm,theorem
% Corollary -> corollary
% Lemma -> lemma
% Claim -> claim
% Axiom -> axiom
% Conjecture -> conjecture
% Fact -> fact
% Hypothesis -> hypothesis
% Assumption ->assumption
% Proposition -> proposition
% Criterion -> criterion
% Definition -> definition
% Example -> example
% Remark -> remark
% Problem -> problem
% Principle -> principle

%%%%%%%%%%%%%%%%%%%%%%%%%%%%%%%%%%%%%%%%%%%%%%%%%%%%%%%%%%%%%
%%%%%%%%%%%%%%%%%%%%%%%%%%%%%%%%%%%%%%%%%%%%%%%%%%%%%%%%%%%%%

\documentclass[11pt]{CGC2}

\usepackage[english]{babel}
\usepackage{verbatim}
\usepackage{graphicx}
\usepackage{color}
\usepackage{longtable}
%\usepackage{amsfonts}
%\usepackage{amssymb}
%\usepackage{latexsym}
%\usepackage{mathrsfs}
%\usepackage[ruled,algosection,noend]{algorithm2e}
%\syntaxonly

%\newcommand{\GR}{Gr\"{o}bner}
\newcommand{\BAA}{Bound $\A$ }
\newcommand{\Aq}{A_q(n,d)}

\newcommand{\A}{{\mathcal A}}
\newcommand{\B}{{\mathcal B}}

\newcommand{\w}{\mathrm{w}}
%%%%%%%%%%%%%%%%%%%%%%%%%%%%%%%%%%%%%%%%%%%%%%%%%%%%%%%%%%%%%%%%%%%%%

\usepackage[english]{babel}
\usepackage{amsmath}
\usepackage{amssymb}
\usepackage{amsthm}
\usepackage[mathcal]{eucal}
\usepackage[latin1]{inputenc}

\begin{document}

%%%%%%%%%%%%%%%%%%%%%%%%%%%%%%%%%%%%%%%%%%%%%%%%%%%%%%%%%%%%%%%%%%%%%
\Logo{}

\begin{frontmatter}

\title{Some bounds on the size of codes}

{\author{Emanuele Bellini}} {\tt{(eemanuele.bellini@gmail.com)}}\\
{Department of Mathematics, University of Trento, Italy.}

{\author{Eleonora Guerrini}}
{{\tt (guerrini@lirmm.fr)}}\\
{{LIRMM, Université de Montpellier 2, France.}}

{\author{Massimiliano Sala}} {\tt{(maxsalacodes@gmail.com)}}\\
{Department of Mathematics, University of Trento, Italy.}

\runauthor{E.~Bellini, E.~Guerrini, M.~Sala}

\begin{abstract}
We present some upper bounds on the size of non-linear codes and their restriction to systematic codes and linear codes. These bounds are independent of other known theoretical bounds, e.g. the Griesmer bound, the Johnson bound or
the Plotkin bound, and one of these is actually an improvement of a bound by Litsyn and Laihonen. Our experiments show that in some cases (the majority of cases for some $q$) our bounds provide the best value, compared to all other theoretical bounds.
\end{abstract}

\begin{keyword}
  Hamming distance, linear code, systematic code, non-linear code, upper bound.
\end{keyword}
\end{frontmatter}
%==================================================================

\section{Introduction}

The problem of bounding the size of a code depends heavily on the code family that we are considering. In this paper we are interested in three types of codes: linear codes, systematic codes and non-linear codes. Referring to the subsequent section for rigorous definitions, with {\bf linear codes} we mean linear subspaces of $(\FF_q)^n$, while with {\bf non-linear codes} we mean (following consolidated tradition) codes that are not necessarily linear. In this sense, a linear code is always a non-linear code, while a non-linear code may be a linear code, although it is unlikely. Systematic codes form a less-studied family of codes, whose definition is given in the next section. 
Modulo code equivalence all (non-zero) linear codes are systematic and all systematic codes are non-linear. In some sense, systematic codes stand in the middle between linear codes and non-linear codes. The size of a systematic code is directly comparable with that of a linear code, since it is a power of the size of $\FF_q$.

In this paper we are interested only in {\bf theoretical bounds}, that is, bounds on the size of a code that can be obtained by a closed-formula expression, although other algorithmic bounds exist (e.g. the Linear Programming bound \cite{CGC-cd-art-Dels73}).
%Any bound $X$ for non-linear codes is  also a bound for both systematic codes and linear codes, while a bound for systematic codes is also a bound for linear codes. Given the constraint on the size of systematic codes, when we consider an upper bound $X$ on the size of non-linear codes, we will consider the largest power of $q$ which is less than or equal to $X$.\\
The algebraic structure of linear codes would suggest the knowledge of a high number of bounds strictly for linear codes, and only a few bounds for the other case. Rather surprisingly, the academic literature reports only one bound for linear codes, the Griesmer bound (\cite{CGC-cd-art-griesm60}), no bounds for systematic codes and many bounds for  non-linear codes. Among those, we recall: the Johnson bound (\cite{CGC-cd-art-john62},\cite{CGC-cd-art-john71},\cite{CGC-cd-book-huffmanPless03}), the Elias-Bassalygo bound (\cite{CGC-cd-art-bass65},\cite{CGC-cd-book-huffmanPless03}), the Levenshtein bound (\cite{CGC-cd-art-lev98}), the Hamming (Sphere Packing) bound and the Singleton bound (\cite{CGC-cd-book-pless98}), and the Plotkin bound (\cite{CGC-cd-art-plotk60}, \cite{CGC-cd-book-huffmanPless03}).\\
Since the Griesmer bound is specialized for linear codes, we would expect it to beat the other bounds,
but even this does not happen, except in some cases. So we have an unexpected situation where the bounds holding for the more general case are numerous and beat bounds holding for the specialized case.\\
\indent
In this paper we present one (closed-formula) bound (\BAA) for a large parte of non-linear codes (including all systematic codes), which is an improvement of a bound by Litsyn and Laihonen in \cite{CGC-cod-art-litlai98}. The crux of our improvement is a preliminary
result presented in Section \ref{secMainThm}, while in Section 4 we are able to prove \BAA.
Then we restrict \BAA\ to the systematic/linear case and compare it with all the before-mentioned bounds by computing their values for a large set of parameters (corresponding to about one week of computations with our computers). Our findings are in favour of \BAA and are reported in Section 5. For large values of $q$, our bound provides the best value in the majority of cases.\\
The only bound that we never beat is Plotkin's, but its range is very small (the distance has to be at least $d > n(1-1/q)$) and the cases falling in this range are a tiny portion with large $q$'s.\\
\indent
For standard definitions and known bounds, the reader is directed to the original articles or to any recent good book, e.g. \cite{CGC-cd-book-huffmanPless03} or \cite{CGC-cd-book-pless98}.

%%%%%%%%%%%%%%%%%%%%%%%%%%%%%%%%%%%%%%%%%%%%%%%%%%%%%%%

\section{Preliminaries}
\label{prel}
We first recall a few definitions.\\
Let $\FF_q$ be the finite field with $q$ elements, where $q$ is any power of any prime.\\                                                                                                                                        
Let $n\geq k\geq 1$ be integers.
Let $C \subseteq \FF_q^n, C \ne \emptyset$. We say that $C$ is an $(n,q)$ {\bf code}. Any $ c\in C$  is  a  {\bf word}. Note that here and afterwards a ``code'' denotes what is called a ``non-linear code'' in the introduction. \\
Let $\phi:(\mathbb{F}_q)^k \rightarrow (\mathbb{F}_q)^n$ be an injective function and let $C = {\rm Im}(\phi)$. We say that $C$ is an $(n,k,q)$ {\bf systematic code} if $\phi(v)_i = v_i$ for any $v \in (\FF_q)^k$ and any 
$1 \le i \le k$.
If $C$ is a vector subspace of $(\mathbb{F}_q)^n$, then $C$ is a  {\bf linear} code. Clearly any non-zero linear code is equivalent to a systematic code. \\
From now on, $\FF$ will denote $\FF_q$ and $q$ is understood.\\
% From the definition of systematic code it follows that, given a systematic code $C$, to any vector $a \in \FF^k$ we can associate only one vector $b \in \FF^{n-k}$ such that $(a,b) \in C$, where $(a,b)$ is the concatenation of $a$ and $b$. From now on, given a systematic code $C$, we use $F$ to denote this association, $a \xrightarrow{F} b$. In particular any $c\in C$ can be seen as $c=(a,F(a))$ for (exactly) one $a \in \FF^k$.\\
We denote with $\d(c,c')$ the {\bf (Hamming) distance} of two words $c,c' \in C$, which is the number of different components between $c$ and $c'$. We denote with $d$ a number such that $1 \le d \le n$ to indicate the {\bf distance of a code}, which is $d = \min_{c,c' \in C,c \ne c'}\{\d(c,c')\}$. Note that a code with only one word has, by convention, distance equal to infinity. The whole $\FF^n$ has distance $1$, and $d = n$ in a systematic code is possible only if $k=1$.\\
From now on, $n,k$ are understood.
% Let  $\pi:(\mathbb{F}_q)^n \rightarrow (\mathbb{F}_q)^k $ be the projection $\pi(a_1,\ldots ,a_n)= (a_1 \ldots,a_k)$. We say that $C$ is  {\bf systematic } if $ (\pi \circ\phi)(v)=v$ for any $v\in (\FF_q)^k$. Clearly, a linear code is systematic.
\begin{definition}\label{ball}
Let $l,\,m \in \NN$ such that $l \leq m$. In $\mathbb{F}^m$, we denote  by $B_x(l,m)$ the set of vectors  with distance from the word $x$ less than or equal to $l$, and we call it the \textbf{ball} 
centered in $x$ of radius $l$.\\
For conciseness, $B(l,m)$ denotes the ball centered in the zero vector.
\end{definition}
\noindent
Obviously, $B(l,m)$ is the set of vectors of weight less than or equal to $l$ and
$$
|B(l,m)|  \; = \; \sum_{j=0}^{l} \binom{m}{j}(q-1)^j.
$$
We also note that any two balls having the same radius over the same field contain the same number of vectors.
\begin{definition}
 The number $A_q (n, d)$ denotes the maximum number of words in a code over $\FF_q$ of length $n$ and distance $d$.  
\end{definition}

%%%%%%%%%%%%%%%%%%%%%%%%%%%%%%%%%%%%%%%%%
%%%%%%%%%%%%%%%%%%%%%%%%%%%%%%%%%%%%%%%%%
%%%%%%%%%%%% M A I N  T H M %%%%%%%%%%%%%
%%%%%%%%%%%%%%%%%%%%%%%%%%%%%%%%%%%%%%%%%
%%%%%%%%%%%%%%%%%%%%%%%%%%%%%%%%%%%%%%%%%

\section{A first result for a special code family}\label{secMainThm}

The maximum number of words in an $(n,d)$ code can be smaller than $A_q (n, d)$ if we have extra constraints on the weight of words.
The following result is an example and it will be instrumental of the proof of \BAA.
\begin{theorem}[]
  \label{thmEPS}
  Let $C$ be an $(n,d)$-code over $\FF^n$. Let  $\epsilon \ge 1$ be such that for any $c \in C$ we have $\w(c) \ge d+\epsilon$.
  Then
  $$ |C| \le A_q(n,d) - \frac{|B(\epsilon,n)|}{|B(d-1,n)|}$$
\end{theorem}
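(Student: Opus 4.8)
The plan is to enlarge $C$ to a code of the same minimum distance by adjoining a carefully chosen set $S$ of low-weight words, and then bound $|S|$ from below by a greedy packing argument; the trivial inequality $|C|+|S| \le A_q(n,d)$ then yields the claim. Note first that the hypothesis forces $d+\epsilon \le n$ (every word of the non-empty code $C$ has weight at most $n$), so in particular $\epsilon \le n$ and $B(\epsilon,n)$ is well defined.

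I would start with the elementary observation that every word $x$ with $\w(x)\le\epsilon$ lies at distance at least $d$ from every $c\in C$: indeed $\d(c,x)\ge \w(c)-\w(x)\ge (d+\epsilon)-\epsilon = d$. In particular, since words of $C$ have weight at least $d+\epsilon>\epsilon$, the ball $B(\epsilon,n)$ is disjoint from $C$.

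Next I would construct $S\subseteq B(\epsilon,n)$ greedily: set $S=\emptyset$ and, as long as there is some $w\in B(\epsilon,n)$ with $\d(w,s)\ge d$ for all $s\in S$, adjoin such a $w$ to $S$. When the process halts, $S$ has minimum distance at least $d$ by construction, and by maximality every word of $B(\epsilon,n)$ is within distance $d-1$ of some element of $S$, i.e.
\[
B(\epsilon,n) \;\subseteq\; \bigcup_{s\in S} B_s(d-1,n).
\]
Since all these balls have the same cardinality $|B(d-1,n)|$, this gives $|B(\epsilon,n)|\le |S|\cdot|B(d-1,n)|$, hence $|S|\ge |B(\epsilon,n)|/|B(d-1,n)|$.

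Finally I would verify that $C\cup S$ is an $(n,d')$-code with $d'\ge d$: distances within $C$ are at least $d$ by hypothesis, distances within $S$ are at least $d$ by construction, and distances between a word of $C$ and a word of $S\subseteq B(\epsilon,n)$ are at least $d$ by the first observation; moreover $C\cap S=\emptyset$, so $|C\cup S|=|C|+|S|$. Therefore $|C|+|S|\le A_q(n,d)$, and combining with the lower bound on $|S|$ yields $|C|\le A_q(n,d)-|B(\epsilon,n)|/|B(d-1,n)|$. The only step needing genuine care is the maximality argument for $S$ together with checking that no collisions or short distances arise between $C$ and $S$; the rest is bookkeeping.
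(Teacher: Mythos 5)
Your proof is correct, and it takes a more direct route than the paper's. The paper first replaces $C$ by a maximal code $D$ of distance $d$ avoiding $B_0(d+\epsilon-1,n)$, then embeds $D$ in a maximum code $\bar{D}$ of distance $d$, and lower-bounds the number of words $N=\bar{D}\cap B_0(d+\epsilon-1,n)$ by showing that maximality of $\bar{D}$ forces the balls $B_x(d-1,n)$, $x\in N$, to cover $B(\epsilon,n)$; the bound then follows from $|D|=|\bar{D}|-|N|\le A_q(n,d)-|N|$. You instead build a greedy maximal packing $S\subseteq B(\epsilon,n)$ with pairwise distances $\ge d$, use its maximality for the same covering statement $B(\epsilon,n)\subseteq\bigcup_{s\in S}B_s(d-1,n)$, and adjoin $S$ to $C$ itself, which is legitimate because of the reverse triangle inequality $\d(c,x)\ge \w(c)-\w(x)\ge d$ (the same fact the paper uses implicitly when it argues an uncovered $y$ could be added to $\bar{D}$). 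Your version avoids the two auxiliary maximal codes and the attendant bookkeeping, and makes the disjointness and distance checks explicit; the paper's version, as a by-product, says something slightly stronger about maximum codes (any maximum $(n,d)$-code extending such a $C$ must contain at least $|B(\epsilon,n)|/|B(d-1,n)|$ words of weight at most $d+\epsilon-1$), though that extra information is not needed for Theorem~\ref{thmLLbound}. Both arguments rest on the same convention, also implicit in the paper, that a code with minimum distance at least $d$ has size at most $A_q(n,d)$.
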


\begin{proof}
%  To avoid confusion, in this proof we emphasize with an index when a ball is centered in the zero word, so we indicate $B_0(r,k)$ instead of $B(r,k)$.\\
%  We  again consider $B_0(r,k)$ in $\FF^{k}$.\\ 
%  Let $C$ be the code satisfying our hypothesis.\\
$C$ belongs to the set of all codes with distance $d$ and contained in $\FF^{n} \setminus B_0(d+\epsilon-1,n)$.
Let $D$ be any code of the largest size in this set, then
  \begin{align}\label{eq1}
    |C| \le |D|   
  \end{align}
Clearly, any word $c$ of $D$ has weight $\w(c)\geq d+\epsilon$.
  Consider also $\bar{D}$, the largest code over $\FF^{n}$ of distance $d$ such that $D \subseteq \bar{D}$.
By definition, the only words of $\bar{D}$ of weight greater than $d+\epsilon-1$ are those of $D$, while all other words of $\bar{D}$ are confined to the ball $B_0(d+\epsilon-1,n)$.
Thus
  \begin{align}\label{eq0}
    |C| \le |D| \le |\bar{D}| \le A_q(n,d)
  \end{align}
  and
  $$\bar{D} \setminus D \subseteq B_0(d+\epsilon-1,n)$$
  Let $\rho = d-1$ and $r = d+\epsilon-1$, so that $r-\rho = \epsilon$, and let $N = \bar{D} \cap B_0(r,n)$. We have: 
  \begin{align}\label{eq2}  
    D = \bar{D} \setminus N, \qquad |D| = |\bar{D}| - |N|
  \end{align}

  We are searching for a lower bound on $|N|$, in order to have an upper bound on $|D|$.
  We start with proving
  \begin{align}\label{eqBx}
    B_0(r-\rho,n) \subseteq \bigcup_{x \in N}B_{x}(\rho,n)
  \end{align}
  Consider $y \in B_0(r-\rho,n)$. If for all $x \in N$ we have that $y \notin B_x(\rho,n)$, then $y$ is a vector whose distance from $N$ is at least $\rho+1$.
  Since $y \in B_0(r-\rho,n)$, also its distance from $\bar{D}\setminus N$ is at least $\rho+1$. Therefore, the distance of $y$ from the whole $\bar{D}$ is at least
  $\rho+1=d$ and so we can obtain a new code $\bar{D} \cup \{y\}$ containing  $D$ and with distance $d$, contradicting
  the fact that $|\bar{D}|$ is the largest size for such a code in $\FF^{n}$. So, (\ref{eqBx}) must hold.

A direct consequence of  (\ref{eqBx}) is
  \begin{align*}
    & |N|\cdot|B_{x}(\rho,n)| \ge |B_0(r-\rho,n)| \,,
  \end{align*}
  which gives
  \begin{align}\label{eq3}
    |N| \ge \frac{|B_0(r-\rho,n)|}{|B_{x}(\rho,n)|} = \frac{|B_0(\epsilon,n)|}{|B_{x}(d-1,n)|} 
  \end{align}
  Using (\ref{eq1}), (\ref{eq0}), (\ref{eq2}) and (\ref{eq3}), we obtain the desired bound:
  %%%%%%
  \begin{align*}
    |C| \le |D| &= |\bar{D}| - |\bar{D} \cap B_0(d+\epsilon-1,n)| \\
                &  \le A_q(n,d) - \frac{|B_0(\epsilon,n)|}{|B_{x}(d-1,n)|}
  \end{align*}
\end{proof}

%%%%%%%%%%%%%%%%%%%%%%%%%%%%%%%%%%%%%%%%%
%%%%%%%%%%%%%%%%%%%%%%%%%%%%%%%%%%%%%%%%%
%%%%%%%%%%%%%% B O U N D %%%%%%%%%%%%%%%%
%%%%%%%%%%%%%%%%%%%%%%%%%%%%%%%%%%%%%%%%%
%%%%%%%%%%%%%%%%%%%%%%%%%%%%%%%%%%%%%%%%%

\section{An improvement of the Litsyn-Laihonen bound}

In 1998 Litsyn and Laihonen prove a bound for non-linear codes: \\ Theorem~1 of \cite{CGC-cod-art-litlai98}, which we write with our notation as follows.
\begin{theorem}[Litsyn-Laihonen bound]
 \label{thmLLboundold}
 % cosa significa avere d = 0 ??? non lo abbiamo definito...
% Max -> gli regaliamo d=0
 Let $1 \le d \le n$. Let $t \in \mathbb{N}$ be such that $t \le n-d$. Let $r \in \NN$ be such that $d-2r \le n-t$, $0 \le r \le t$ and $0 \le r \le \frac{1}{2}d$. Then
 $$A_q(n,d) \le \frac{q^t}{|B(r,t)|}A_q(n-t,d-2r)$$
\end{theorem}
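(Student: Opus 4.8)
The plan is to run the classical Elias-style averaging-and-puncturing argument. Fix a code $C\subseteq\FF^n$ of length $n$ and distance $d$ with $|C|=A_q(n,d)$; such a code exists by the definition of $A_q(n,d)$. Since $t\le n-d$, we have $0\le t<n$, so every word $c\in C$ decomposes uniquely as $c=(c',c'')$, where $c'\in\FF^t$ consists of the first $t$ coordinates of $c$ and $c''\in\FF^{n-t}$ of the remaining ones. For each $v\in\FF^t$ set
\[
C_v \;=\; \bigl\{\, c\in C : c'\in B_v(r,t)\,\bigr\},
\]
which makes sense as a ball since $r\le t$ (cf.\ Definition~\ref{ball}).

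First I would establish the double-counting identity
\[
\sum_{v\in\FF^t}|C_v| \;=\; \sum_{c\in C}\bigl|\{v\in\FF^t : \d(v,c')\le r\}\bigr| \;=\; \sum_{c\in C}|B_{c'}(r,t)| \;=\; |C|\cdot|B(r,t)|,
\]
using the observation, recorded just after Definition~\ref{ball}, that all balls of a fixed radius over $\FF$ have the same cardinality. Since the left-hand sum has $q^t$ terms, by averaging there is a vector $v^*\in\FF^t$ with
\[
|C_{v^*}| \;\ge\; \frac{|C|\cdot|B(r,t)|}{q^t} \;=\; \frac{|B(r,t)|}{q^t}\,A_q(n,d).
\]

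Next I would delete (puncture) the first $t$ coordinates of the words of $C_{v^*}$ to form $\widetilde C=\{c'' : c\in C_{v^*}\}\subseteq\FF^{n-t}$, and check that this loses no words and produces distance at least $d-2r$. If $c_1\ne c_2$ lie in $C_{v^*}$, then $c_1',c_2'\in B_{v^*}(r,t)$, so the triangle inequality gives $\d(c_1',c_2')\le\d(c_1',v^*)+\d(v^*,c_2')\le 2r$; since $\d(c_1,c_2)=\d(c_1',c_2')+\d(c_1'',c_2'')$ and $\d(c_1,c_2)\ge d$, we get $\d(c_1'',c_2'')\ge d-2r$. Provided $d-2r\ge 1$, this shows both that $c\mapsto c''$ is injective on $C_{v^*}$ and that $\widetilde C$ is a code of length $n-t$ (note $d-2r\le n-t$ by hypothesis) of distance at least $d-2r$. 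Hence $|C_{v^*}|=|\widetilde C|\le A_q(n-t,d-2r)$, and combining with the previous displayed inequality,
\[
\frac{|B(r,t)|}{q^t}\,A_q(n,d) \;\le\; A_q(n-t,\,d-2r),
\]
which rearranges to the claimed bound after multiplying by $q^t/|B(r,t)|$.

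I expect the one point needing care to be the injectivity of the puncturing map, that is, the step where I use $d-2r\ge 1$ so that distinct words of $C_{v^*}$ cannot coincide after deleting the first $t$ coordinates; for odd $d$ this is automatic from $r\le\tfrac12 d$ together with integrality of $r$, and for even $d$ the borderline value $r=\tfrac d2$ is the only case where the precise interpretation of the hypotheses must be pinned down. Beyond this, the proof is just the decomposition, the one-line double count that produces the gain factor $|B(r,t)|/q^t$, and the triangle-inequality estimate on the new distance; the real content is recognizing that averaging over all $q^t$ translates $B_v(r,t)$, rather than fixing a single coordinate pattern, is exactly what yields the improvement over the naive shortening bound $A_q(n,d)\le q^t\,A_q(n-t,d)$ (recovered here by taking $r=0$).
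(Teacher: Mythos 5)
Your proof is correct and is essentially the argument the paper relies on: the paper cites this statement from Litsyn--Laihonen without a standalone proof, but its proof of Bound $\A$ (Theorem~\ref{thmLLbound}) follows precisely this outline of averaging over all centers $v\in\FF^t$ of the balls $B_v(r,t)$ (there phrased via the indicator function $\psi$) and then puncturing the first $t$ coordinates, with the same triangle-inequality estimate giving distance at least $d-2r$. Your caveat about the borderline case $r=\tfrac{d}{2}$ for even $d$ is well taken, since injectivity of the puncturing map needs $d-2r\ge 1$ and the theorem's hypotheses should be read accordingly; apart from pinning that down, everything checks out.
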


Let $C$ be an $(n,d)$-code over $\FF$, let $k=\lfloor \log_q(|C|) \rfloor$. We say that $C$ is \emph{systematic-embedding} if $C$ contains a systematic code $D$ with size $|D|=q^k$. Obviously a systematic code is systematic-embedding with $D=C$. Moreover if the code is linear then $k$ is the dimension of $C$.\\
All known families of maximal codes are either systematic codes or systematic-embedding codes (see e.g., \cite{CGC-cd-art-preparata}, \cite{CGC-kerd72} and \cite{CGC-cd-art-goethal}).
\\
We are ready to show a strengthening of Theorem \ref{thmLLboundold} restricted to systematic-embedding codes: \BAA. In the proof we follow initially the outline of the proof of \cite{CGC-cod-art-litlai98}[Theorem 1] and then we apply Theorem \ref{thmEPS}.
\begin{theorem}[Bound $\A$]
 \label{thmLLbound}
 % cosa significa avere d = 0 ??? non lo abbiamo definito...
 Let $1 \le d \le n$. Let $t \in \mathbb{N}$ be such that $t \le n-d$. Let $r \in \NN$ be such that $d-2r \le n-t$, $0 \le r \le t$ and $0 \le r \le \frac{1}{2}d$. Suppose that there is an $(n,d)$-code $C$ over $\FF$ such that $|C|=A_q(n,d)$ and $C$ is systematic-embedding. Let $t \le k = \lfloor \log_q(|C|) \rfloor $. Then
 $$
   A_q(n,d) \le \frac{q^t}{|B(r,t)|} \left(A_q(n-t,d-2r) - \frac{|B(r,n-t)|}{|B(d-2r-1,n-t)|} + 1 \right)
 $$
\end{theorem}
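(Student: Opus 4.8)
The plan is to mimic the classical Litsyn--Laihonen shortening argument and then sharpen the count of short codewords using Theorem~\ref{thmEPS}. First I would take $C$ with $|C|=A_q(n,d)$ systematic-embedding, so it contains a systematic code $D$ with $|D|=q^k$ and $k\ge t$. Identify the first $k$ coordinates as the information positions of $D$. For each vector $a\in(\FF_q)^t$, consider the subcode of $D$ whose first $t$ information symbols equal $a$; since $D$ is systematic, each such subcode is nonempty and they partition $D$ into $q^t$ classes. A pigeonhole/averaging argument over these $q^t$ classes, combined with the balls $B(r,t)$ in the deleted coordinates (two words agreeing on positions outside a ball of radius $r$ in the first $t$ coordinates must lie in the same shortened code after puncturing), yields a shortened code $C'$ of length $n-t$ with $|C'|\ge |C|\,|B(r,t)|/q^t$ and distance $d'\ge d-2r$. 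This is exactly the step that gives the original bound; I would reproduce it carefully, paying attention to the roles of $r\le t$, $r\le d/2$, and $d-2r\le n-t$ so that $C'$ is a genuine $(n-t,d-2r)$-code.

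The new ingredient is that the shortened code $C'$ can be taken to have a weight restriction, which lets Theorem~\ref{thmEPS} bite. Because $D$ is systematic on the first $k$ coordinates and $t\le k$, the subcode of $D$ with information prefix $a=0$ contains the all-zero word; more usefully, I would choose the averaging class $a$ so that after subtracting a fixed codeword (translating $C'$, which preserves distance and size) every surviving word of $C'$ has weight at least $d-2r+r = d-r$ in the $n-t$ surviving coordinates --- equivalently weight at least $(d-2r)+\epsilon$ with $\epsilon=r$. The point is that puncturing $t$ coordinates can decrease the weight of a codeword by at most the number of those $t$ coordinates in its support, and by a counting argument there is a choice of information prefix (hence a coset representative) for which every nonzero residual word had at least $r$ of its nonzero entries among the punctured positions, so its residual weight is $\ge d-r = (d-2r)+r$. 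Thus $C'$ is an $(n-t,d-2r)$-code all of whose words have weight $\ge (d-2r)+r$, and it is itself a code of that form, so Theorem~\ref{thmEPS} with $\epsilon=r$ gives
\[
  |C'| \le A_q(n-t,d-2r) - \frac{|B(r,n-t)|}{|B(d-2r-1,n-t)|}.
\]
Combining with $|C'|\ge A_q(n,d)\,|B(r,t)|/q^t$ and rearranging (the ``$+1$'' absorbs the fact that the zero word, or the chosen coset representative, is counted once on the weight-restricted side but should be restored) yields the stated inequality.

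The main obstacle I anticipate is the bookkeeping in the middle step: making precise, via an averaging over the $q^t$ information prefixes, that one can simultaneously (i) keep $|C'|\ge |C|\,|B(r,t)|/q^t$, (ii) keep the distance $\ge d-2r$, and (iii) arrange the weight lower bound $\w(c)\ge (d-2r)+r$ for all words after a suitable translation. In particular one must check that the coset representative used to impose the weight condition is legitimate (translation by a codeword of the full code $C$, not of $C'$) and that subtracting it does not destroy the size bound obtained from the pigeonhole count. Once the shortened-and-translated code $C'$ is in hand with these three properties, the appeal to Theorem~\ref{thmEPS} and the final algebraic rearrangement are routine; the ``$+1$'' should be tracked through this rearrangement rather than guessed.
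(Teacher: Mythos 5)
Your proposal takes essentially the same route as the paper: the Litsyn--Laihonen puncturing/averaging step to obtain a length-$(n-t)$ code of distance $\ge d-2r$ and size $\ge |C|\,|B(r,t)|/q^t$, then a translation by a codeword of $C$ whose $t$-prefix equals the chosen ball center (exactly where systematic-embedding is needed) so that every nonzero shortened word has weight $\ge (d-2r)+r$, and finally Theorem \ref{thmEPS} with $\epsilon=r$ applied after discarding the zero word, which is precisely the source of the ``$+1$''. When writing it up, fix two slips: the selected words have \emph{at most} $r$ (not ``at least $r$'') nonzero entries among the punctured coordinates, and the bound $\w(\bar{c})\ge d-r$ also uses that the translated code contains $0$, so each nonzero word has total weight $\ge d$; moreover, the prefix-$0$ subcode of a nonlinear systematic code need not contain the all-zero word, though that aside is not needed once the translation argument is in place.
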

\begin{proof}
 We consider an $(n,d)$ code $C$ such that $|C| = A_q(n,d)$ and $C$ is systematic-embedding. By hypothesis $C$ must exist.
 %We claim that we can suppose $0 \in C$. Indeed, if $0 \not\in C$, let $c_0$ be a word of $C$. Then the set $C_0=\{ c-c_0 \mid c \in C\}$ is a $(n,d)$-code containing the zero vector
 %and with $|C_0|=|C|$. \\
We number all words in $C$ in any order: $C=\{ c_i \mid 1\leq\ i\leq A_q(n,d)\}$. \\ We indicate the $i$-th word with $c_i = (c_{i,1},\dots,c_{i,n})$.
 We puncture $C$ as follows:
 \begin{enumerate}
  \item[(i)] we choose any $t$ columns among the $k$ columns of the systematic part of $C$, $1\leq j_1,\dots,j_t\leq n$; since two codes are equivalent w.r.t. column permutations we suppose $j_1=1,\dots,j_t=t$.\\
             Let us split each word $c_i \in C$ in two parts
            \begin{align*}
                \tilde{c_i} = (c_{i,1},\dots,c_{i,t}) \quad \bar{c_i}   = (c_{i,t+1},\dots,c_{i,n}),\qquad \mbox{ so} \quad  c_i = (\tilde{c_i},\bar{c_i}).
            \end{align*}
  \item[(ii)] We choose a $z\in \FF^t$.
  \item[(iii)] We collect in $I$ all $i$'s s.t.  $d(z,\tilde{c_i}) \le r$;
  \item[(iv)] We delete the first $t$ components of $\{c_i \mid i \in I\}$.
 \end{enumerate}
Then the punctured code $\bar{C}_z$ obtained by (i),(ii),(iii) and (iv) is:
 \begin{align*}
   \bar{C}_z = \{\bar{c_i} \mid i\in I\}=\{\bar{c_i} \mid 1\leq i\leq A_q(n,d), d(z,\tilde{c_i}) \le r\}
 \end{align*}
 We claim that we can choose $z$ in such a way that $\bar{C}_z$ is equivalent to a code with the following properties:
 \begin{align}
  \label{eqLeng} & \bar{n} = n-t \\ 
  \label{eqDist} & \bar{d} \ge d-2r \\
  \label{eqCard} & |\bar{C}_z| \ge \frac{|C|}{q^t}|B(r,t)| \\
  \label{eqWeig} & \w(\bar{c_i}) \ge d-r \text{ for all } \bar{c_i} \ne 0 
 \end{align}
 (\ref{eqLeng}) is obvious.
 As regards (\ref{eqDist}), note that  $\d(c_i,c_j) = \d(\tilde{c_i},\tilde{c_j}) + \d(\bar{c_i},\bar{c_j}) \ge d$ and also that  $\tilde{c_i},\tilde{c_j}\in B_z(r,t)$  implies $\d(\tilde{c_i},\tilde{c_j}) \le 2r$.
 Therefore for any $i\ne j$
 $$
   2r + \d(\bar{c_i},\bar{c_j}) \ge \d(\tilde{c_i},\tilde{c_j}) + \d(\bar{c_i},\bar{c_j}) \ge d \,.
 $$
The proof of  (\ref{eqCard}) is more involved and we need to consider the average number $M$ of the $i$'s such that $\tilde{c_i}$ happens to be in a sphere of radius $r$ (in $\FF^{t}$). 
The average is taken over all sphere centers, that is, all vectors $x$'s in $\FF^t$, so that 
$$
  M= \frac{1}{|\FF^t|} \sum_{x\in \FF^t} |\{i \mid 1\leq i\leq A_q(n,d), \tilde{c_i} \in B_x(r,t)\}| \,.
$$
Let us define a function:
%
%\pagebreak
%
 $$
   \psi: \FF^t \times \FF^t \longrightarrow \{0,1\}, \qquad
   \psi(x,y) =
   \bigg \{
   \begin{array}{rl}
   1, & \d(x,y) \le r \\
   0, & \text{otherwise} \\
   \end{array}.
 $$
 Then we can write $M$ and $|B_y(r,t)|$ (for any $y\in \FF^t$) as
 $$
   M  = \frac{1}{q^t} \sum_{x \in \FF^{t}} \sum_{i=1}^{\Aq}    \psi(x,\tilde{c_i}) \qquad
   |B_y(r,t)|= \sum_{x  \in \FF^t} \psi(x,y) \,.
 $$
 By swapping variables we get
 \begin{align*}
   M  = \frac{1}{q^t} \sum_{x \in \FF^{t}} \sum_{i=1}^{\Aq}    \psi(x,\tilde{c_i})
     = \frac{1}{q^t} \sum_{i=1}^{\Aq}     \sum_{x  \in \FF^t} \psi(x,\tilde{c_i})
     = \frac{\Aq}{q^t} |B_{\tilde{c_i}}(r,t)|    \,.
 \end{align*}
 This means that there exists $\hat{x} \in \FF^t$ such that
 $$
     |\{i \mid 1\leq i\leq A_q(n,d), \tilde{c_i} \in B_{\hat x}(r,t)\}| \geq M \geq \frac{\Aq}{q^t} |B(r,t)| \,. 
 $$
In other words, there are at least  $\frac{|C|}{q^t} |B(r,t)|$ $c_i$'s such that their $\tilde{c_i}$'s are contained in  $B_{\hat x}(r,t)$.
Distinct $c_i$'s may well give rise to the same $\tilde{c_i}$'s, but they always correspond to distinct $\bar{c_i}$'s (see the proof of (\ref{eqDist})),
so there are at least $\frac{|C|}{q^t} |B(r,t)|$ (distinct) $\bar{c_i}$'s such that their corresponding $\tilde{c_i}$'s
fall in $B_{\hat x}(r,t)$. By choosing $z=\hat{x}$ we then have at least   $\frac{|C|}{q^t} |B(r,t)|$ (distinct) codewords of $\bar{C}_z$ and so (\ref{eqCard}) follows.

We claim that (\ref{eqWeig}) holds if $0\in C$ and $z = 0$. Infact:
\begin{align*}
  & \w(c) = \d(0,c) \geq d, \,\quad \forall c \in C \mbox{ such that } c\neq 0. \\
  & z = 0 \quad \implies \qquad y \in B_z(r,t) \iff \w(y) \le r.
\end{align*}
  As a consequence, any nonzero word $c_i = (\tilde{c_i},\bar{c_i})$ of weight at most $r$ in $\tilde{c_i}$ has weight at least $d-r$ in the other $n-t$ components.\\
 If $0 \notin C$ or $z \ne 0$ we consider a code $C+v$ equivalent to $C$, by choosing the translation $v$ in the following way. By hypothesis of systematic-embedding there exists $\hat{c} \in C$ such that its first $t$ coordinates form the vector $\hat{x}$. By considering $v = \hat{c}$ we obtain the desired code, thus (\ref{eqWeig}) is proved.\\
 
 Now we call $X$ the largest $(\bar{n},d-2r)$-code containing the zero word and such that $\w(\bar{x}) \ge d-r = (d-2r) +r$, $\forall \bar{x} \in X$. Observe that $X$ satisfies (\ref{eqLeng}), (\ref{eqDist}), (\ref{eqCard}), (\ref{eqWeig}) and so $|X| \ge |\bar{C_z}|$. Then we can apply Theorem \ref{thmEPS} to $X \setminus \{0\}$ and $\epsilon = r$, and obtain the following chain of inequalities:
 \begin{align*}
  \frac{|C|}{q^t} |B(r,t)| \le |\bar{C_z}| \le |X| \le A_q(\bar{n},d-2r) - \frac{|B(r,\bar{n})|}{|B(d-2r-1,\bar{n})|} + 1
 \end{align*}
 and since $|C| = A_q(n,d)$ we have the bound:
 \begin{align*}
  A_q(n,d) \le \frac{q^t}{|B(r,t)|}\left(A_q(\bar{n},d-2r) - \frac{|B(r,\bar{n})|}{|B(d-2r-1,\bar{n})|} + 1\right).
 \end{align*}
\end{proof}

%%%%%%%%%%%%%%%%%%%%%%%%%%%%%%%%%%%%%%%%%%%%%%%%%%%%%%%%%%%%%%%%%%%%%%%%%%%%%%%%

\subsection{Systematic case}

%In the systematic case $A_q(n,d) = q^k$.\\
When we restrict ourselves into the systematic/linear case, then the value $A_q(n,d)$ can only be a power of $q$, and if the dimension of the code $C$ is $k$, then $A_q(n,d) = q^k$. By choosing $t=k$ we have the following corollary:

%%%%%%%%%%%%%%%%%%%%%%%%%%%%%%%%%%%%%%%%%
%%%%%%%%%%%%%%%%%%%%%%%%%%%%%%%%%%%%%%%%%
%%%%%%%%%%%% B O U N D - B %%%%%%%%%%%%%%
%%%%%%%%%%%%%%%%%%%%%%%%%%%%%%%%%%%%%%%%%
%%%%%%%%%%%%%%%%%%%%%%%%%%%%%%%%%%%%%%%%%

\begin{corollary}[Bound $\B$]
  \label{boundB}
  Let $k,d,r \in \NN, d \ge 2,k \ge 1$. Let $n$ be such that there exists an  $(n,k,q)$ systematic code $C$ with distance at least $d$.\\
  If $0 \le r \le \min\{ \lfloor \frac{d-1}{2} \rfloor,k\} $, then
  $$ |B(r,k)| \le A_q(n-k,d-2r) - \frac{|B(r,n-k)|}{|B(d-2r-1,n-k)|} + 1.$$
\end{corollary}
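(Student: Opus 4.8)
The statement is precisely Theorem~\ref{thmLLbound} specialized to $t=k$, so the plan is to re-run the proof of that theorem for the given systematic code rather than quoting it. Let $C$ be the $(n,k,q)$ systematic code whose existence is assumed; then $|C|=q^k$, and $C$ is systematic-embedding in the trivial way (take $D=C$), with $k=\lfloor\log_q|C|\rfloor$. The point is that I do \emph{not} need $C$ to be size-optimal: in the proof of Theorem~\ref{thmLLbound} the equality $|C|=A_q(n,d)$ is used only on the very last line to rephrase the conclusion, so carrying $|C|=q^k$ through the whole argument, never invoking $A_q(n,d)$, is enough. First I would translate the hypotheses: $0\le r\le\min\{\lfloor(d-1)/2\rfloor,k\}$ gives $0\le r\le k=t$ and $0\le r\le\tfrac12 d$ at once, and $r\le\lfloor(d-1)/2\rfloor$ additionally forces $d-2r\ge 1$ (so $d-2r$ is a genuine distance and $d-2r-1\ge 0$); the remaining numerical requirement $d-2r\le n-k$ that makes $A_q(n-k,d-2r)$ meaningful comes for free, since the Singleton bound gives $q^k\le q^{n-d+1}$, i.e.\ $n-k\ge d-1$, whence $d-2r\le d-2\le n-k-1$ as soon as $r\ge 1$. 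The case $r=0$ I would settle by hand: the inequality then reads $1\le A_q(n-k,d)-1/|B(d-1,n-k)|+1$, which holds because $A_q(n-k,d)\ge 1$ and $n>k$ (a systematic $(n,k,q)$ code of distance $\ge d\ge 2$ cannot have $n=k$, as the only systematic $(k,k,q)$ code is $\FF^k$, of distance $1$).

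Next I would puncture $C$ on its $k$ systematic coordinates, following steps (i)--(iv) of the proof of Theorem~\ref{thmLLbound} with $t=k$: write $c_i=(\tilde{c_i},\bar{c_i})$ with $\tilde{c_i}\in\FF^k$, fix a center $z\in\FF^k$, keep the indices with $\d(z,\tilde{c_i})\le r$, delete the first $k$ entries, and call the result $\bar{C}_z$. Because $C$ is genuinely systematic, the map sending a word of $C$ to its first $k$ coordinates is a \emph{bijection} onto $\FF^k$, which makes the counting step (property~(\ref{eqCard})) exact rather than merely on-average: for \emph{every} center $z$ there are exactly $|B(r,k)|$ surviving indices, and by the distance computation behind~(\ref{eqDist}) distinct survivors produce distinct $\bar{c_i}$'s, so $\bar{C}_z$ is a code of length $n-k$ (property~(\ref{eqLeng})), distance $\ge d-2r$ (property~(\ref{eqDist})), and size exactly $|B(r,k)|$. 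To install the weight property~(\ref{eqWeig}) I would, as in the theorem, translate first: pick the codeword $\hat c\in C$ whose first $k$ coordinates equal the chosen center, replace $C$ by the equivalent code $C-\hat c$ --- still of size $q^k$, still of distance $\ge d$, still with first $k$ coordinates running over $\FF^k$ --- and take $z=0$. Then every surviving word other than $\hat c-\hat c=0$ has total weight $\ge d$ and weight $\le r$ on the first $k$ coordinates, hence weight $\ge d-r=(d-2r)+r$ on the last $n-k$.

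Finally I would finish exactly as in the proof of Theorem~\ref{thmLLbound}: let $X$ be the largest $(n-k,d-2r)$-code containing the zero word and with every nonzero word of weight $\ge d-r$; then $|\bar{C}_z|\le|X|$ because $\bar{C}_z$ enjoys all of (\ref{eqLeng})--(\ref{eqWeig}), and applying Theorem~\ref{thmEPS} to $X\setminus\{0\}$ with $\epsilon=r$ gives $|X|-1=|X\setminus\{0\}|\le A_q(n-k,d-2r)-|B(r,n-k)|/|B(d-2r-1,n-k)|$. Chaining, $|B(r,k)|=|\bar{C}_z|\le|X|\le A_q(n-k,d-2r)-|B(r,n-k)|/|B(d-2r-1,n-k)|+1$, which is the claim; equivalently, this is Theorem~\ref{thmLLbound} at $t=k$ after substituting $A_q(n,d)=q^k$ and cancelling the factor $q^k/|B(r,k)|$. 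The step I expect to demand the most care is the interface with Theorems~\ref{thmLLbound} and~\ref{thmEPS}: verifying that dropping the optimality assumption is harmless (it is, because $|C|=q^k$ suffices everywhere), that $X\setminus\{0\}$ is a legitimate $(n-k,d-2r)$-code to which Theorem~\ref{thmEPS} genuinely applies (in particular the degenerate possibilities $r=0$ and $X\setminus\{0\}$ a single word of distance $+\infty$ by convention must be accommodated), and that all the parameter inequalities behind the puncturing survive the substitution $t=k$.
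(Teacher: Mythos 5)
Your proposal is correct and follows essentially the same route as the paper, which obtains the corollary simply by taking $t=k$ in Theorem~\ref{thmLLbound} and using $|C|=q^k$; your re-run of that proof on the given systematic code (noting that optimality of $C$ is only used to phrase the theorem's conclusion, checking the parameter constraints via Singleton, and observing that systematicity makes the averaging step exact) is just a careful, rigorous filling-in of the paper's one-line derivation.
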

In the systematic/linear case the Litsyn-Laihonen bound becomes:
$$|B(r,k)| \le A_q(n-k,d-2r).$$
%Ele remark
Easy computations can be done in the case $d=3$, since in this case $r$ can be at most $1$, so that:
\begin{itemize}
 \item $|B(1,k)| = (q-1)k + 1$
 \item $A_q(n-k.d-2r) = A_q(n-k,1) = q^{n-k}$
 \item $|B(1,n - k)| = (q - 1)(n - k) + 1$
 \item $|B(d - 2r - 1, n - k)| = |B(0, n - k)| = 1$
\end{itemize}
Our bound then reduces to:
$$0 \le q^{n-k} - (q - 1)n - 1$$
which is stronger then the Litsyn-Laihonen bound, which in the case $d=3$ reduces to:
$$0 \le q^{n-k} - (q - 1)k - 1.$$

\section{Experimental comparisons with other upper bounds, remarks and conclusion}

We have analyzed the case of linear codes, implementing Bound $\B$. The algorithm to compute the bound takes as inputs $n,d$, and returns the largest $k$ (checks are done until $k=n-d+1$) such that the inequality of the bound holds. If the inequality always holds in this range, $n-d+1$ is returned. Then we compared our upper bound on $k$ with other bounds, restricting those which hold in the general non-linear case to the systematic case. In particular they give a bound on $A_q(n,d)$ instead of a bound on $k$. As a consequence, for example, if the Johnson bound returns the value $A_q(n,d)$ for a certain pair $(n,d)$, then we compare our bound with the value $\lfloor \log_q(A_q(n,d)) \rfloor$, which is the largest power $s$ of $q$ such that $q^s \le A_q(n,d)$.\\
The inequality in Theorem \ref{boundB} involves the value $A_q(n-k,d-2r)$, which is the maximum number of words that we can have in a \emph{non-linear} code of length $n-k$ and distance $d-2r$. To implement Bound $\B$ it is necessary to compute $A_q(n-k,d-2r)$; when this value is unknown (we use known values only in the binary case for $n = 3,\dots,28 , d = 3,\dots,16$), we return instead an upper bound on it, choosing the best between the Hamming (Sphere Packing), Singleton, Johnson, and Elias bound (the Plotkin bound is used when possible). Even though it is a very strong bound, we do not use the Levenshtein bound because it is very slow as $n$ grows. This means that if better values of $A_q(n-k,d-2r)$ can be found, then Bound $\B$ could return even tighter results.\\

Table \ref{tabStat1} and \ref{tabStat2} show a comparison between all bounds' performances, except for Plotkin's, due to its restricted range. For each bound and for each $q=2,3,4,5,7,8,9,11,13,16,17,19,23,25,27,29$ we have computed, in the range $n=3,\dots,100$ and $d=3,\dots,n-1$, the percentage of cases the bound is the ``best'' known bound between Bound $\B$, the Griesmer, Johnson, Levenshtein, Elias, Hamming and Singleton bound. Both wins and draws are counted in the percentage, since more than one bound may reach the best known bound, and in this case we increased the percentage of each best bound. \\
For each $q$ the most performing bound is in bold. Up to $q=7$ the Levenshtein bound is the most performing. From $9 \le q \le 29$ we have that Bound $\B$ is the most performing bound, and in particular, in the case $q=29$, it is the best known bound almost $91\%$ of the times. \\ 

Table \ref{tabBnewbounds}, instead, shows some cases (one per each $q=7,\dots,29$) where Bound $\B$ beats all other known bounds. This happens from $q=7$, for the range of $n$ considered. The letters B, J, H, G, E, S and L stands respectively for Bound $\B$, Johnson, Hamming (Sphere Packing), Griesmer, Elias, Singleton, and Levenshtein bound. It can be seen that there are some cases where Bound $\B$ is tight, as for the parameters $(9,17,7)$, for which there exist a code with distance $10$.\\

Tables \ref{tabB3-100}, and \ref{tabB3-100bis} give emphasis to the number of times Bound $\B$ improves the best known bound (thus the cases where it beats all other bounds). In the considered range Bound $\B$ starts to beat all other bounds from $q=7$. \\
The third row of Tables \ref{tabB3-100} and \ref{tabB3-100bis} shows how many times (percentage over the number of draws and wins) the value $\delta = \frac{|B(r,n-k)|}{|B(d-2r-1,n-k)|}$ is different from zero. Informally, we can view $\delta$ as the probability to randomly pick up a word of weight less than $r$ from a ball of radius $d-2r-1$. We can notice that this percentage is very high, which means that a weaker version of Bound $\B$, which is similar to the Litsyn-Laihonen bound for systematic codes, could be used, by simply searching the largest $k$ satisfying:
$$ |B(r,k)| \le A_q(n-k,d-2r) + 1$$
It is curious to notice that in all the wins we have $\delta=0$, and that $\delta=0$ also $38094$ times over the $46967$ ties and wins. This means that the weaker version of Bound $\B$ is sufficient to obtain most of the wins and ties in the investigated cases.\\
We note that in general, if $r$ is greater than $d$, we expect $\delta$ big, decreasing very quickly as $r$ increases, holding $d$ fixed; this happens since $|B(x,k)|$ decreases following a gaussian distribution (roughly approximating $|B(x,k)|$ with a factorial), and so any time we subtract $2r$ the decrease is doubled.\\
The fourth row of Tables \ref{tabB3-100} and \ref{tabB3-100bis} shows the ratio between the number of times the Plotkin bound has been used to bound $A_q(n-k,d-2r)$ and the number of draws and wins. Third and fourth row show values which are close for $q$ small and gets further as $q$ grows. This happens because almost all the times that the weaker version (with $\delta = 0$) of Bound $\B$ ties with the best known bound, a strong bound on $A_q(n-k,d-2r)$ must be used, and the strongest bound is Plotkin's, which though has a smaller range of applicability as $q$ grows.\\
We report in the fifth row of Tables \ref{tabB3-100} and \ref{tabB3-100bis} the fact that the maximum ratio $d/n$ reached in the wins of Bound $\B$ grows up to the value $0.64$ and then seems to get stabilized toward $0.5$. This means that Bound $\B$ is a very strong bound for distances which are no more than $\frac{2}{3}$ of the length $n$ for small values of $q$, and no more than half of the length $n$ for bigger values of $q$.\\

Comparisons have been made using inner MAGMA (\cite{CGC-MAGMA}) implementations of known upper bounds, except for the Johnson bound. For this bound we noted that the inner MAGMA implementation could be improved and so we used our own MAGMA implementation for this bound. 
%%%%%%%%%%%%%%%%%%%%%%%%%%%%%%%%%%%%%%%%%%%%%%%%%%%%%%%%%%%%%%%%%%%%%%%%%%%%%%%%%%%%
\section*{Acknowledgements}
The first two authors would like to thank the third author (their supervisor). Partial results appear in \cite{CGC-cd-phdthesis-ele}. The authors would like to thank: Ludo Tolhuizen (Philips Group Innovation, Research), the MAGMA group and in particular John Cannon.

%%%%%%%%%%%%%%%%%%%%%%%%%%%%%%%%%%%%%%%%%%%%%%%%%%%%%%%%%%%%%%%%%%%%%%%%%%%%%%%%%%%%
%%%%%%%%%%%%%%%%%%%%%%%%%%%%%%%%%%%%%%%%%%%%%%%%%%%%%%%%%%%%%%%%%%%%%%%%%%%%%%%%%%%%

%%%%%%%%%%%%%%%%%%%%%%%%%%%%%%%%%%%%%%%%%%%%%%%%%%%%%%%%%%%%%%%%%%%%%%%%%%%%%%%%%%%%

\bibliography{RefsCGC}

%%%%%%%%%%%%%%%%%%%%%%%%%%%%%%%%%%%%%%%%%%%%%%%%%%%%%%%%%%%%%%%%%%%%%%%%%%%%%%%%%%%%

\newpage

The following tables show the results computed in the range $n = 3,\dots,100$, $d=3,\dots,n-1$.\\
% All bound statistics n = 3 .. 100, d = 3 .. n-1, q = 2,3,4,5,7,8,9,11
\begin{table}[h]
\begin{center}
\begin{tabular}{p{0.18\textwidth}|llllllll}
\hline
$q$ & 2 & 3 & 4 & 5 & 7 & 8 & 9 & 11\\
\hline
%Bound $\A$  & 5.639 & 4.755 & 6.038 & 6.438 &  7.742 & 8.521 & 9.110 &  9.068 \\
Bound $\B$  & 38.02 & 31.20 & 31.20 & 31.94 &  40.73 & 48.64 & \bf{55.27} &  \bf{66.44} \\
Johnson     & 40.65 & 31.18 & 33.50 & 35.13 &  35.70 & 35.51 & 35.09 &  33.26 \\
Hamming     & 18.12 & 15.65 & 16.37 & 16.35 &  16.03 & 15.88 & 15.57 &  14.69 \\
Griesmer    & 56.32 & 39.83 & 32.32 & 29.14 &  30.91 & 36.97 & 43.28 &  55.15 \\
Levenshtein & \bf{72.65} & \bf{69.68} & \bf{66.27} & \bf{64.02} &  \bf{60.80} & \bf{58.24} & 54.47 &  46.26 \\
Elias       & 6.859 & 32.28 & 38.27 & 40.02 &  40.82 & 40.14 & 37.24 &  31.37 \\
Singleton   & 0.000 & 0.021 & 0.084 & 0.189 &  0.610 & 0.926 & 1.241 &  3.619 \\
\hline
\end{tabular}
\end{center}
\caption{When each bound is the best for $2 \le q \le 11$.}
\label{tabStat1}
\end{table}

% All bound statistics n = 3 .. 100, d = 3 .. n-1, q = 13,16,17,19,23,25,27,29
\begin{table}
\begin{center}
\begin{tabular}{p{0.18\textwidth}|llllllll}
\hline
$q$ & 13 & 16 & 17 & 19 & 23 & 25 & 27 & 29\\
\hline
%Bound $\A$  &  10.71 &   11.63 & 11.95 &  13.53 &    15.72 &  17.08 &  18.7 &  20.56 \\
Bound $\B$  &  \bf{76.43} &   \bf{81.61} & \bf{82.75} &  \bf{85.42} &    \bf{88.11} &  \bf{88.72} &  \bf{89.40} &  \bf{90.77} \\
Johnson     &  30.80 &   26.61 & 24.87 &  21.88 &    17.08 &  15.51 &  14.37 &  13.34 \\
Hamming     &  13.59 &   11.91 & 11.26 &  10.12 &    8.269 &  7.553 &  7.048 &  6.606 \\
Griesmer    &  63.39 &   71.91 & 72.27 &  71.94 &    69.79 &  69.43 &  68.65 &  67.87 \\
Levenshtein &  39.93 &   32.86 & 30.65 &  27.50 &    22.62 &  20.70 &  19.44 &  18.37 \\
Elias       &  27.06 &   21.84 & 20.01 &  17.59 &    12.48 &  10.84 &  9.657 &  8.689 \\
Singleton   &  4.439 &   4.629 & 6.985 &  6.712 &    10.08 &  12.01 &  14.12 &  18.01 \\
\hline
\end{tabular}
\end{center}
\caption{When each bound is the best for $13 \le q \le 29$.}\label{tabStat2}
\end{table}

\begin{table}
\begin{center}
\scriptsize{
\begin{tabular}{lll|lllllll}
\hline
$q$ &$n$ &$d$&  B &  J &  H &  G &  E &  S &  L \\
\hline
 7  & 45 & 21& \bf{22} & 23 & 24 & 23 & 23 & 25 & 23 \\
 8  & 51 & 24& \bf{25} & 27 & 28 & 26 & 26 & 28 & 26 \\
 9  & 17 & 7 & \bf{10} & 11 & 11 & 11 & 11 & 11 & 11 \\
 11 & 90 & 55& \bf{30} & 41 & 42 & 32 & 35 & 36 & 31 \\
 13 & 32 & 9 & \bf{23} & 24 & 24 & 24 & 24 & 24 & 25 \\
 16 & 52 & 14& \bf{38} & 39 & 40 & 39 & 39 & 39 & 41 \\
 17 & 38 & 9 & \bf{29} & 30 & 30 & 30 & 30 & 30 & 31 \\
 19 & 42 & 9 & \bf{33} & 34 & 34 & 34 & 34 & 34 & 36 \\
 23 & 91 &17 & \bf{74} & 75 & 75 & 75 & 75 & 75 & 78 \\
 25 & 31 & 5 & \bf{26} & 27 & 27 & 27 & 27 & 27 & 28 \\
 27 & 88 &24 & \bf{64} & 66 & 67 & 65 & 66 & 65 & 69 \\
 29 &100 &29 & \bf{71} & 74 & 74 & 72 & 74 & 72 & 76 \\
\hline
\end{tabular}
}
\end{center}
\caption{Some cases where Bound $\B$ beats all the other bounds in the range $7 \le q \le 29$.}
\label{tabBnewbounds}
\end{table}

% Bound B n = 3 .. 100, q = 2 .. 11
\begin{table}
\begin{center}
\begin{tabular}{p{0.18\textwidth}|llllllll}
\hline
$q$ & 2 & 3 & 4 & 5 & 7 & 8 & 9 & 11\\
\hline
Draws(D) (\%) & 38.02 & 31.20 & 31.20 & 31.94 & 40.54 & 47.59 & 53.44 & 64.80\\
Wins(W) (\%) & 0 & 0 & 0 & 0 & 0.1894 & 1.052 & 1.830 & 3.955\\
$\delta=0$ (\% over D+W) & 44.67 & 71.14 & 61.77 & 59.82 & 68.75 & 74.22 & 79.71 & 85.43\\
Use of Plotkin (\% over D+W) & 41.50 & 69.52 & 56.84 & 51.98 & 57.02 & 61.16 & 65.09 & 68.57 \\
Maximum $d/n$ in wins & - & - & - & - & 0.47 & 0.48 & 0.52 & 0.63 \\
Plotkin Range $d/n$ & 0.50 & 0.67 & 0.75 & 0.80 & 0.87 & 0.88 & 0.89 & 0.91 \\
\hline
\end{tabular}
\end{center}
\caption{Statistics for Bound $\B$ for $2 \le q \le 11$.}
\label{tabB3-100}
\end{table}

% Bound B n = 3 .. 100, q = 13 .. 29
\begin{table}
\begin{center}
\begin{tabular}{p{0.18\textwidth}|llllllll}
\hline
$q$ & 13 & 16 & 17 & 19 & 23 & 25 & 27 & 29\\
\hline
Draws(D) (\%) & 73.11 & 77.41 & 78.48 & 77.84 & 73.43 & 71.18 & 69.60 & 69.60\\
Wins(W) (\%) & 3.514 & 4.208 & 5.113 & 7.574 & 14.69 & 17.55 & 19.80 & 21.19\\
$\delta=0$ (\% over D+W) & 87.96 & 88.45 & 88.49 & 88.28 & 85.00 & 83.02 & 80.89 & 78.62\\
Use of Plotkin (\% over D+W) & 65.54 & 67.00 & 66.09 & 61.80 & 55.16 & 51.98 & 48.98 & 46.54 \\

Maximum $d/n$ in wins & 0.634 & 0.640 & 0.486 & 0.487 & 0.489 & 0.490 & 0.491 & 0.492 \\
Plotkin Range $d/n$ & 0.92 & 0.94 & 0.94 & 0.95 & 0.96 & 0.96 & 0.96 & 0.97 \\
\hline
\end{tabular}
\end{center}
\caption{Statistics for Bound $\B$ for $13 \le q \le 29$.}\label{tabB3-100bis}
\end{table}

\end{document}